\newtheorem{prop}{Proposition}
\newtheorem{proof}{Proof}
\newtheorem{problem}{Problem}
\title{Digital Cancelation of Self-Interference for Single-Frequency Full-Duplex Relay Stations via Sampled-Data Control}
\begin{document}

\AUTHOR{%
  \author{Hampei Sasahara}{Kyoto},
  \author{Masaaki Nagahara}{Kyoto},
  \author{Kazunori Hayashi}{Kyoto},
  \author{Yutaka Yamamoto}{Kyoto}
}

\AFFILIATE{%
  \affiliate{Graduate School of Informatics, 
    Kyoto University, 
    Japan}{Kyoto}
}

\email{nagahara@ieee.org
(corresponding author)
}

\begin{abstract}%
In this article, we propose sampled-data design of digital filters that cancel the
continuous-time effect of coupling waves in a single-frequency full-duplex relay station.
In this study, we model a relay station as a continuous-time
system while conventional researches treat it as a discrete-time system. 
For a continuous-time model, we
propose digital
feedback canceler based on
the sampled-data $H^\infty$ control theory to cancel coupling waves taking intersample behavior into account.
We also propose robust control against 
unknown multipath interference.
Simulation results are shown to illustrate the effectiveness of the proposed method.
\end{abstract}

\begin{keywords}%
wireless communication, coupling wave cancelation, sampled-data control, $H^{\infty}$ optimization, relay station.

\end{keywords}

\received{0}{00}{2011}
\revised{0}{00}{2011}

\maketitle

%
%
\section{Introduction}
\label{sec:intro}
In wireless communications,
relay stations have been used to relay radio signals
between radio stations that cannot directly communicate
with each other due to the signal attenuation.
More recently, relay stations are used to achieve a certain spatial diversity
called cooperative diversity  to cope with fading channels \cite{Laneman}.
On the other hand,
it is important to efficiently utilize the scarce bandwidth
due to the limitation of frequency resources
\cite{Cov}, while conventional relay stations commonly use
different wireless resources, such as frequency, time and code,
for their reception and transmission of the signals.
For this reason, a single-frequency full-duplex relay station,
in which signals with the same carrier frequency are received
and transmitted simultaneously, is considered as one of key technologies
in the fifth generation (5G) mobile communications systems \cite{2020beyond}.
In order to realize such full-duplex relay stations,
{\em self-interference} caused by coupling waves is the key issue \cite{Jain}.

Fig.~\ref{coupling} illustrates self-interference by coupling waves.
In this figure, radio signals with carrier frequency $f$
are transmitted from the base station (denoted by BS).
One terminal (denoted by T1) directly receives the signal from the base station,
but the other terminal (denoted by T2) 
is so far from the base station that they cannot communicate directly.
Therefore, a relay station (denoted by RS) is attached between them
to relay radio signals.
Then, radio signals with the same carrier frequency $f$
are transmitted from RS to T2, but also they
are fed back to the receiving antenna
directly or through reflection objects.
As a result, self-interference is caused in the relay station,
which may deteriorate the quality of communication
and, even worse, may destabilize the closed-loop system.

To tackle with the issue of self-interference cancelation, 
many methods have been proposed 
for single-frequency full-duplex systems.
Analog cancelation has been proposed in \cite{Rad,Knox},
in which analog devices are used for canceling coupling waves.
Since coupling wave paths are physically analog systems
and there is no quantization problem,
this design is theoretically the most ideal except for implementation issues.
On the other hand,
\textit{digital cancelation} 
has attracted increasing attention,
in which the interference is subtracted in the digital domain
by using digital signal processing techniques
\cite{sakai2006simple,Dua,Gol,Chun,Snow,Haya13,Sen}.
Digital cancelers benefit easy implementation on digital devices
in exchange for the response between sampling instants.
In addition, spatial domain techniques, called antenna cancelation, 
has been also proposed in \cite{Jain,Kho},
in which they try to reduce the interference by arranging
antenna placement.
See \cite{Jain,Dua} for details.

	\begin{figure}[t]
		\centering
		\scalebox{0.55}{\includegraphics{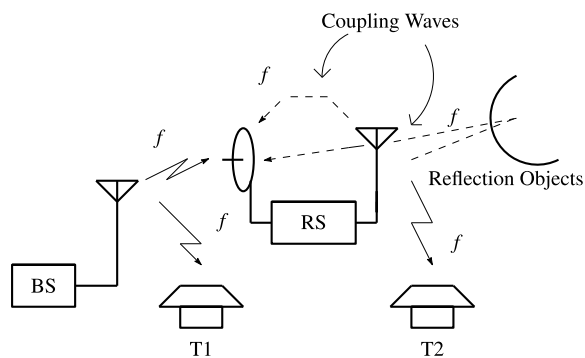}}
		\caption{Self-interference}
		\label{coupling}
	\end{figure}

For the problem of self-interference,
a pre-nulling method \cite{Chun} and
adaptive methods \cite{sakai2006simple,Haya13}
have been proposed 
to cancel the effect of coupling waves.
In these studies, a relay station is modeled by
a discrete-time system, and the performance
is optimized in the discrete-time domain.
However, radio waves are in nature continuous-time
signals and hence the performance should be
discussed in the continuous-time domain.
In other words, one should take account of {\em intersample behavior}
for coupling wave cancelation.

In theory, if the signals are completely band-limited below the
Nyquist frequency,
then the intersample behavior can be restored 
from the sampled-data in principle \cite{Shannon},
and the discrete-time domain approaches might work well.
However, the assumption of perfect band limitedness is
hardly satisfied in real signals since
\begin{enumerate}
\item real baseband signals are not fully band-limited,
\item pulse-shaping filters, such as raised-cosine filters, do not act perfectly,
\item and the nonlinearity in electric circuits
adds frequency components beyond the Nyquist frequency.
\end{enumerate}
One might think that if the sampling frequency is fast enough,
then the assumption is almost satisfied and there is no problem.
But this is not true; firstly, the sampling frequency cannot be arbitrarily
increased in real systems, and secondly, even though the sampling
is quite fast, intersample oscillations may happen
in feedback systems \cite[~Sect.~7]{Yamamoto99-1}.

To solve the problem mentioned above, 
we propose a new design method for coupling wave cancelation
based on the {\em sampled-data control theory}
\cite{Chen95,Yamamoto99-1}.
We model the transmitted radio signals and coupling waves
as continuous-time signals, and optimize the 
worst case continuous-time error due to coupling waves
by a {\em digital} canceler.
This is formulated as a sampled-data $H^\infty$ optimal control problem,
which can be solved via the fast-sampling fast-hold (FSFH) method
\cite{Kel,Yamamoto1999729}.
We also propose robust feedback cancelers that
can take account of uncertainties in
coupling wave path characteristic
such as unknown multipath interference due to,
for example, large structures that reflect radio waves, or
the change of weather conditions \cite{Tak}.
Design examples are shown to illustrate the proposed methods.

The present manuscript expands on our recent conference contributions
\cite{SSHRsci,sasaharaSICE14} by incorporating
robust feedback control into the formulation.

The remainder of this article is organized as follows.
In Section \ref{sec:rs}, we derive a mathematical model of a relay station considered in this study.
In Section \ref{sec:fb}, we propose sampled-data $H^\infty$ control for cancelation of self-interference.
Here we also discuss robust control against uncertainty in the delay time.
In Section \ref{sec:sim}, simulation results are shown to illustrate the effectiveness of the proposed method.
In Section \ref{sec:conc}, we offer concluding remarks.

\subsection*{Notation}
Throughout this article, we use the following notation.
We denote by $L^2$ the Lebesgue space consisting of all square integrable real functions 
on $[0, \infty)$ endowed with $L^2$ norm $\|\cdot\|_2$.
The symbol $t$ denotes the argument of time, $s$ the argument of Laplace transform and $z$ the argument of $Z$ transform.
These symbols are used to indicate whether a signal or a system is of continuous-time or discrete-time.
The operator $e^{-Ls}$ with nonnegative real number $L$ denotes the continuous-time delay operator
with delay time $L$.
For a matrix $A$, $\overline{\sigma}(A)$ denotes the maximum singular value of $A$.

%
%
\section{Relay Station Model}
\label{sec:rs}
In this section, we provide a mathematical model of a relay station
with self-interference phenomenon.
	\begin{figure}[t]
	\includegraphics[width = 85mm]{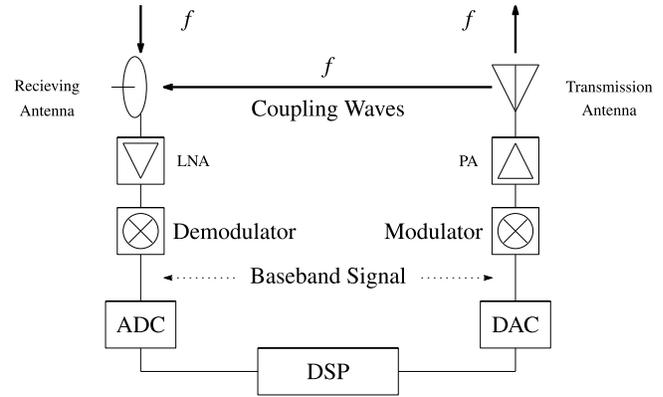}
	\caption{Relay Station}
	\label{Relay Station}
	\end{figure}

Fig.~\ref{Relay Station} depicts a single-frequency full-duplex relay station 
implemented with a digital canceler \cite{Knox}.
A radio wave with carrier frequency $f$
from a base station is accepted at the receiving antenna
and amplified by the low noise amplifier (LNA).
Then, the received signal is demodulated to a baseband signal 
by the demodulator,
and converted to a digital signal
by the analog-to-digital converter (ADC).
The obtained digital signal is then processed by the digital signal processor (DSP)
into another digital signal, which is converted to an analog signal by the
digital-to-analog converter (DAC).
Finally, the analog signal is modulated to a radio wave with
carrier frequency $f$, amplified by the power amplifier (PA)
and transmitted by the transmission antenna.
A problem here is that the transmitted signal will again reach the receiving antenna.
This is called coupling wave and causes self-interference,
which deteriorates the communication quality.

	\begin{figure}[t]
	\includegraphics[width = 85mm]{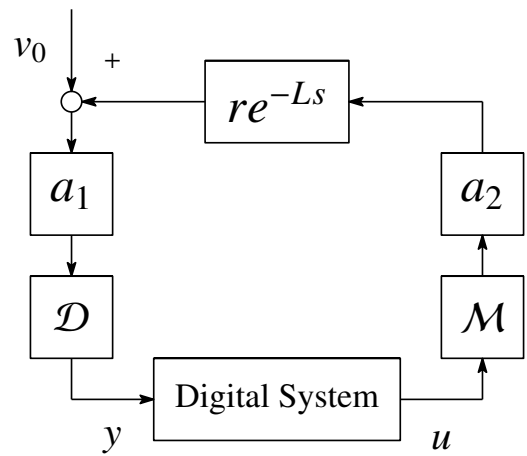}
	\caption{Simple Block Diagram of Relay Station}
	\label{Relay Station2}
	\end{figure}

Fig.~\ref{Relay Station2} shows a simplified block diagram of the relay station.
In Fig.~\ref{Relay Station}, we model
LNA and PA in Fig.~\ref{Relay Station} as static gains, $a_1$ and $a_2$, respectively.
The modulator is denoted by
${\mathcal M}$ and the demodulator by ${\mathcal D}$.
We assume that the coupling wave channel is a flat fading channel,
that is, all frequency components of a signal through this channel experience
the same magnitude fading.
Then the channel can be treated as an all-pass system.
In this study, we adopt a delay system, $re^{-Ls}$, as a channel model,
where $r>0$ is the attenuation rate and $L>0$ is a delay time.
The block named ``Digital System'' includes ADC, DSP and DAC in Fig.~\ref{Relay Station}.

In this article, we consider the quadrature amplitude modulation (QAM), 
which is used widely in digital communication systems, as a modulation method.
QAM transforms a transmission signal into two orthogonal carrier waves,
that is a sine wave and a cosine wave.
We assume the transmission signal $u(t)$ is given by
\begin{Meqnarray}
	u(t) := \sum_{k} g(t-kh) \left[
	\begin{array}{c}
	u_k^I \\
	u_k^Q \\
	\end{array}
	\right].
\end{Meqnarray}
In this expression, $g(t)$ is a general pulse-shaping function,
$h$ is the sampling period, and $u_k^I,u_k^Q$ denote respectively the in-phase
and the quadrature components of a transmission symbol.
We assume that the support of the Fourier transform $G(j\omega)$ of $g(t)$
is finite and the bandwidth is much less than $4 \pi f$.
In other words, there exists a frequency $f_g $ ($0<f_g\ll f$) such that 
$|G(j\omega)|=0$ for any $\omega \notin (-2\pi f_g,2\pi f_g)$.
Then the modulated signal $\tilde{u}(t)$ can be written as \cite[~Chap.~2]{HayComm}
\begin{Meqnarray}
	\tilde{u}(t) &=& {\mathcal M} u(t)\nonumber\\
	     &=& \sum_{k} g(t-kh) ( u_k^I \cos 2\pi ft- u_k^Q \sin 2\pi ft).
\end{Meqnarray}

On the other hand, 
the demodulation operator ${\mathcal D}$ is a linear operator
satisfying ${\mathcal D}{\mathcal M}=1$ \cite{HayComm}.
Fig.~\ref{Demodulator} shows the block diagram of ${\mathcal D}$.
In this block diagram, $H_{\rm id}(j\omega)$ is the ideal low-pass filter with 
cut-off frequency $f_c$ satisfying 
\begin{Meqnarray}
 H_{\rm id}(j\omega) = \begin{cases}
 	1, & \text{~if~} \omega<2\pi f_c,\\
	0, & \text{~otherwise}.
	\end{cases}
\end{Meqnarray}
The cut-off frequency is chosen to satisfy
$f_g \ll f_c \ll f$.
By the linearity of ${\mathcal D}$,
we obtain an equivalent block diagram shown in Fig.~\ref{Relay Station3} to Fig.~\ref{Relay Station2}.
Here
\begin{Meqnarray}
\tilde{u}(t-L)
 &=& \sum_k g(t-L-kh) \biggl\{ \bigl(u_k^I \cos (2\pi fL) \nonumber\\
 &+& u_k^Q \sin (2\pi fL)\bigr) \cos (2\pi ft) \nonumber\\
 &+& \bigl(u_k^I \sin (2\pi fL) - u_k^Q \cos (2\pi fL)\bigr) \sin (2\pi ft)\biggr\}.
\end{Meqnarray}
Thus, we have
\begin{Meqnarray}
& &\cos (2\pi ft) \cdot \tilde{u}(t-L)\nonumber\\
&~& =\frac{1}{2} \sum_k g(t-L-kh) \biggl\{ u_k^I \cos (2\pi fL)
+ u_k^Q \sin (2 \pi fL) \nonumber\\
&~& +(u_k^I \cos (2\pi fL) + u_k^Q \sin (2\pi fL)) \cos (4 \pi ft) \nonumber\\
&~& +(u_k^I \sin (2\pi fL) - u_k^Q \cos (2\pi fL)) \sin (4 \pi ft) \Big\}.
\end{Meqnarray}
From this, we have
\begin{Meqnarray}
 & & 2H_{\rm id}[\cos(2\pi ft) \cdot \tilde{u}(t-L)]\nonumber\\
 &~& = \sum_k g(t-L-kh) \bigl\{u_k^I \cos(2\pi f L) + u^Q_k \sin(2\pi f L)\bigr\}.
\end{Meqnarray}
In the same way, we have
\begin{Meqnarray}
 & & 2H_{\rm id}[-\sin(2\pi ft) \cdot \tilde{u}(t-L)]\nonumber\\
 &~& = \sum_k g(t-L-kh) \bigl\{-u_k^I \sin(2\pi f L) + u^Q_k \cos(2\pi f L)\bigr\}.
\end{Meqnarray}
Finally, we have the following relation:
\begin{Meqnarray}
	u_L(t) &=& {\mathcal D}\left( a_1ra_2 \tilde{u}(t-L) \right)\nonumber \\
	&=& \alpha A_L u(t-L),
\end{Meqnarray}
where $\alpha := a_1a_2r$ and
\begin{Meqnarray}
	A_L := \left[
	\begin{array}{cc}
	\cos (2\pi fL) & \sin (2\pi fL) \\
	-\sin (2\pi fL) & \cos (2\pi fL) \\	
	\end{array}
	\right].
\end{Meqnarray}

	\begin{figure}[t]
	\centering
	\includegraphics[width = 70mm]{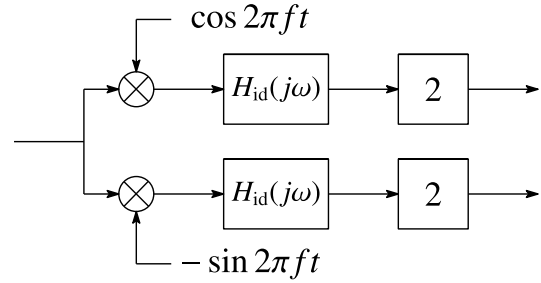}
	\caption{Structure of a demodulator}
	\label{Demodulator}
	\end{figure}

	\begin{figure}[t]
	\includegraphics[width = 85mm]{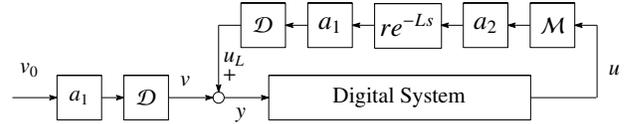}
	\caption{Equivalent Block Diagram of Relay Station}
	\label{Relay Station3}
	\end{figure}

Finally we obtain a relay station model depicted in Fig.~\ref{Relay Station Model}.
By this figure, we can see that the relay station with self-interference is a \emph{feedback} system.
In practice, the gain of  PA in Fig.~\ref{Relay Station} is very high
(e.g.,  $a_2=1000$) and the loop gain becomes much larger than $1$,
and hence we should discuss the \emph{stability} as well as self-interference cancelation.
To achieve these requirements, we design the digital controller in the digital system,
which is precisely shown in Fig.~\ref{Digital Factors}.
	\begin{figure}[t]
	\includegraphics[width = 85mm]{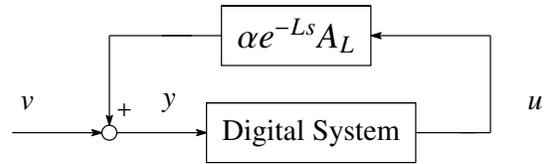}
	\caption{Relay Station Model}
	\label{Relay Station Model}
	\end{figure}
	\begin{figure}[t]
	\centering
	\includegraphics[width = 80mm]{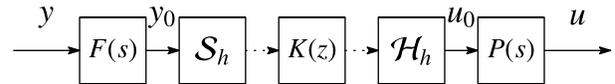}
	\caption{Digital System}
	\label{Digital Factors}
	\end{figure}

In Fig.~\ref{Digital Factors}, ADC in Fig.~\ref{Relay Station} is modeled by an
anti-aliasing analog filter $F(s)$ with an ideal sampler ${\mathcal S}_h$ with sampling period $h > 0$,
defined by
\begin{Meqnarray}
	\begin{array}{c}
	{\mathcal S}_h : \{ y_0(t) \} \mapsto \{ y_d[n] \} : y_d[n] = y_0(nh), \\
     n = 0,1,2,\ldots. \\
	\end{array}
\end{Meqnarray}
For the DSP block in Fig.~\ref{Relay Station}, we assume a digital filter denoted by $K(z)$,
which we design for self-interference cancelation.
DAC in Fig.~\ref{Relay Station} is modeled by 
a zero-order hold, ${\mathcal H}_h$, 
defined by
\begin{Meqnarray}
	\begin{array}{c}
	{\mathcal H}_h : \{u_d[n]\} \mapsto \{u_0(t)\}:u_0(t)=u_d[n], \\
	t \in [nh, (n+1)h), n=0,1,2,\ldots,\\
	\end{array}
\end{Meqnarray}
and a post analog low-pass filter denoted by $P(s)$.
We assume that $F(s)$ and  $P(s)$ are proper, stable and real-rational transfer function 
matrices.
Note that a strictly proper function is normally used for $F(s)$
and it is included in the assumption.

%
%
\section{Feedback Control}
\label{sec:fb}
Fig.~\ref{Feedback Canceler} shows the block diagram of the feedback control system
of the relay station.
	\begin{figure}[t]
	\centering
	\includegraphics[width = 80mm]{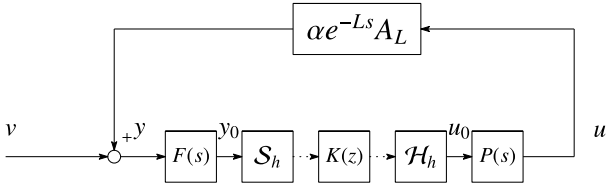}
	\caption{Feedback Canceler}
	\label{Feedback Canceler}
	\end{figure}
For this system, we find the digital controller, $K(z)$, that stabilizes the feedback system
and also minimize the effect of self-interference, $z:=v-u$, for any $v$.
To obtain a reasonable solution, we restrict the input continuous-time signal $v$ to the following set
\begin{Meqnarray}
	WL^2 := \{v=Ww:w \in L^2, \|w \|_{2} = 1\},
\end{Meqnarray}
where $W$ is a continuous-time LTI system with real-rational, stable, and strictly proper transfer function $W(s)$.
Under this assumption, we first solve a \emph{nominal} control problem where all system parameters are
previously known. Then we propose a \emph{robust} controller design against uncertainty 
in the coupling wave paths.

%
%
\subsection{Nominal Controller Design}
Here we consider the nominal controller design problem formulated as follows:
\begin{problem}
Find the digital controller (canceler) $K(z)$ that stabilizes the feedback system
in Fig.~\ref{Feedback Canceler} and uniformly minimizes the $L^2$ norm of the error $z = v-u$ for any $v \in WL^2$.
\end{problem}

This problem is reducible to a standard sampled-data $H^{\infty}$ control problem \cite{Chen95,Yamamoto99-1}.
To see this, let us consider the block diagram shown in Fig.~\ref{Feedback Canceler2}.
Let $T_{zw}$ be the system from $w$ to $z$.
Then we have
\begin{Meqnarray}
	z = v-u = T_{zw}w
\end{Meqnarray}
and hence uniformly minimizing $\| z\|_{2}$ for any $v \in WL^2$ is equivalent to minimizing the $H^{\infty}$ norm of $T_{zw}$,
\begin{Meqnarray}
	\| T_{zw}\|_{\infty} = \sup_{w \in L^2,~\|w \|_{2} = 1} \|T_{zw} w \|_{2}.
\end{Meqnarray}

Let $\Sigma (s)$ be a generalized plant given by
\begin{Meqnarray}
	\Sigma (s) = \left[
	\begin{array}{cc}
	W(s) & -P(s) \\
	F(s)W(s) & \alpha e^{-Ls}A_LF(s)P(s) \\
	\end{array}
	\right].
\end{Meqnarray}
By using this, we have
\begin{Meqnarray}
	T_{zw}(s) = {\mathcal F}(\Sigma (s), {\mathcal H}_h K(z) {\mathcal S}_h),
\end{Meqnarray}
where ${\mathcal F}$ denotes the linear-fractional transformation (LFT) \cite{Chen95}.
Fig.~\ref{Ge_Feedback} shows the block diagram of this LFT.
Then our problem is to find a digital controller $K(z)$ that minimizes $\|T_{zw}\|_{\infty}$.
This is a standard sampled-data $H^{\infty}$ control problem, and can be efficiently solved via FSFH approximation \cite{Kel,Nagahara13-2,Yamamoto1999729}.

\begin{figure}[t]
\includegraphics[width = 85mm]{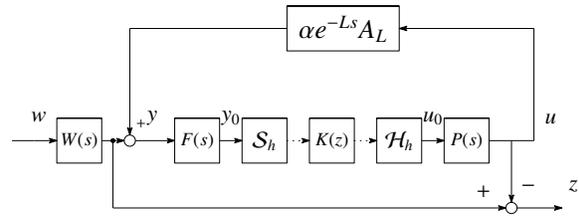}
\caption{Block Diagram for Feedback Canceler Design}
\label{Feedback Canceler2}
\end{figure}

\begin{figure}[t]
\centering
\includegraphics[width = 70mm]{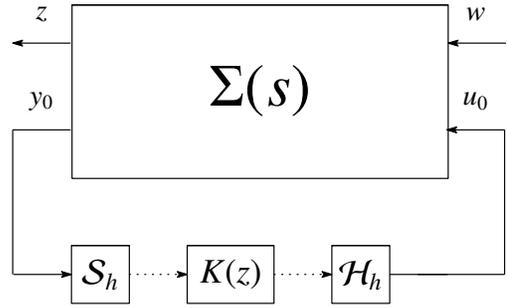}
\caption{LFT $T_{zw}={\mathcal F}(\Sigma, {\mathcal H}_hK{\mathcal S}_h)$}
\label{Ge_Feedback}
\end{figure}

Note that if there exists a controller $K(z)$ that minimizes $\|T_{zw}\|_{\infty}$, then the feedback system is stable and the effect of self-interference $z=v-u$ is bounded by the $H^{\infty}$ norm.
We summarize this as a proposition.

\begin{prop}
Assume $\|T_{zw}\|_{\infty} \leq \gamma$ with $\gamma >0$.
Then the feedback system shown in Fig.~\ref{Feedback Canceler} is stable, and for any $v \in WL^2$ we have $\|v-u\|_{2} \leq \gamma$.
\end{prop}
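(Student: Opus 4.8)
The plan is to treat the two assertions separately. The bound $\|v-u\|_2\le\gamma$ is immediate from the input--output description set up just before the statement: every $v\in WL^2$ is of the form $v=Ww$ with $w\in L^2$ and $\|w\|_2=1$, and for such an input the closed-loop relation established for Fig.~\ref{Feedback Canceler2} gives $z=v-u=T_{zw}w$, so that
\[
 \|v-u\|_2=\|T_{zw}w\|_2\le\|T_{zw}\|_\infty\,\|w\|_2=\|T_{zw}\|_\infty\le\gamma
\]
by the definition of $\|T_{zw}\|_\infty$ and the hypothesis. This part is pure bookkeeping.

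For stability I would exploit the structure $T_{zw}=\mathcal{F}(\Sigma,\mathcal{H}_hK\mathcal{S}_h)$. The four blocks of the generalized plant, $W$, $-P$, $FW$ and $\alpha e^{-Ls}A_LFP$, are all stable, since $W$, $P$ and $F$ are stable and real-rational, $e^{-Ls}$ is a bounded operator, and $A_L$ is a constant matrix; hence the only possible source of instability in Fig.~\ref{Feedback Canceler} is the feedback interconnection of the digital controller $\mathcal{H}_hK\mathcal{S}_h$ with the $(2,2)$ block $\Sigma_{22}=\alpha e^{-Ls}A_LF(s)P(s)$. The hypothesis $\|T_{zw}\|_\infty\le\gamma<\infty$ already forces the linear-fractional map, and in particular the loop operator $(I-\Sigma_{22}\mathcal{H}_hK\mathcal{S}_h)^{-1}$, to be well-defined and bounded on $L^2$. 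I would then appeal to the standard sampled-data $H^\infty$ framework \cite{Chen95,Yamamoto99-1}: for a generalized plant meeting the usual stabilizability and detectability assumptions --- which hold here, $\Sigma$ itself being stable and the loop being internally stabilizable by $K$ --- finiteness of the closed-loop $H^\infty$ norm is equivalent to internal stability of the sampled-data feedback system, whence the interconnection of Fig.~\ref{Feedback Canceler} is stable.

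The step I expect to be the genuine obstacle is this last one, namely excluding the degenerate possibility that $T_{zw}$ remains bounded while an unstable loop mode stays invisible from $w$ or in $z$. In the configuration at hand $w$ influences $z$ both directly through $W$ and through the measured signal via $FW$, while $z$ depends directly on the plant input through $-P$, so no such pathological cancellation can occur; turning this informal observation into a rigorous statement --- e.g.\ by verifying the standard regularity conditions on $\Sigma$ for its lifted (fast-sampling fast-hold) discretization --- is the only point that calls for more than bookkeeping.
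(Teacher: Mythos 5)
Your proposal is correct and follows essentially the same route as the paper: the norm bound is the identical computation via $v=Ww$, $\|w\|_2=1$, and submultiplicativity, while the stability claim is the contrapositive ``unstable $\Rightarrow$ $\|T_{zw}\|_\infty$ unbounded,'' which the paper states in a single sentence. Your additional discussion of the stable blocks of $\Sigma$ and the exclusion of hidden unstable modes merely fills in detail the paper leaves implicit, so no substantive difference in approach.
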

\begin{proof}
First, if the feedback system is unstable, then the $H^{\infty}$ norm becomes unbounded.
Next, for $v \in WL^2$ there exists $w \in L^2$ such that $v = Ww$ and $\| w \|_{2} = 1$.
Then, inequality $\| T_{zw} \|_{\infty} \leq \gamma $ gives
\begin{Meqnarray}
	\|v-u\|_{2} = \|T_{zw}w\|_{2} \leq \|T_{zw}\|_{\infty} \|w\|_{2} \leq \gamma.
\end{Meqnarray}
\hfill $\Box$
\end{proof}

%
%

\subsection{Robust Controller Design against Multipath Interference}
\label{sec:rob}
In practice, the characteristic of the coupling wave channel changes
due to, for example,
large structures that reflect radio waves.
In this situation, it is difficult to predict the coupling wave paths beforehand,
and hence there must be uncertainties in the paths.
Under this uncertainty, the nominal controller may lead to deterioration of cancelation performance,
and even worse, it may make the feedback system unstable.
To solve this problem, we propose
\emph{robust} controller design against the uncertainty.

Let us assume the characteristic of the coupling wave paths
in Fig.~\ref{Relay Station Model} is perturbed as
\begin{Meqnarray}
 r e^{-Ls} \mapsto r e^{-Ls} + \displaystyle{ \sum_{i=1}^{M} r_{i} e^{-L_is}},
\end{Meqnarray}
where $r_i$ and $L_i$ are the attenuation ratio and the delay time of the $i$-th path,
respectively.
Note that $M$ represents the number of additional paths.
Since the additional paths are detour paths, we assume
\begin{Meqnarray}
 L_i > L, \quad i=1,2,\ldots,M.
\end{Meqnarray}
Then the characteristic of the feedback path in Fig.~\ref{Feedback Canceler} is perturbed as
\begin{Meqnarray}
 \alpha e^{-Ls}A_L \mapsto \alpha e^{-Ls}A_L + \sum_{i=1}^{M}\alpha_i e^{-L_is}A_{L_i}
\end{Meqnarray}
where $\alpha_i:=a_1a_2r_i$.
Define the error transfer function matrix $E(s)$ as
\begin{Meqnarray}
 E(s) := \sum_{i=1}^M \frac{\alpha_i}{\alpha}e^{-(L_i-L)s}A_{L_i-L}.
 \label{eq:Es}
\end{Meqnarray}
Since $A_L$ is a rotation matrix on $\mathbf{R}^2$ and the angle is $2\pi fL$ clockwise,
we have
\begin{Meqnarray}
  & & \alpha e^{-Ls}A_L + \sum_{i=1}^{M} \alpha_i e^{-L_is}A_{L_i} \nonumber\\
 & & \quad = \alpha e^{-Ls}A_L\biggl(I + \sum_{i=1}^{M}\dfrac{\alpha_i}{\alpha} e^{-(L_i-L)s} A_L^{-1} A_{L_i}\biggr)\nonumber \\
 & & \quad = \alpha e^{-Ls}A_L\left(I + E(s)\right).
\end{Meqnarray}

Take a frequency weighting function matrix $W_2(s)$ that is real rational and satisfies
\begin{Meqnarray}
 \overline{\sigma}(E(j\omega)) < \overline{\sigma}(W_2(j\omega)), 
\end{Meqnarray}
for all $\omega \in {\mathbf R}$.
Since $A_{L_i-L}$ is an orthogonal matrix, the equation (\ref{eq:Es}) gives
\begin{Meqnarray}
 \label{singular value}
 & &\overline{\sigma}(E(j\omega))  \leq \sum_{i=1}^{M} \frac{\alpha_{i}}{\alpha} \overline{\sigma}\left( e^{-j(L_i-L)\omega}A_{L_i-L}\right) \leq \sum_{i=1}^M\frac{r_i}{r}.\nonumber\\
\end{Meqnarray}

Then the uncertainty in the coupling wave paths can be modeled as multiplicative perturbation, that is,
for any
$M>0$, $r_i \geq 0$, $L_i > L$ ($i=1,\ldots,M$), we have
\begin{Meqnarray}
 & &\alpha e^{-Ls}A_L + \displaystyle{ \sum_{i=1}^{M} \alpha_i e^{-L_is}A_{L_i}}\nonumber \\
 & & \quad \in \{\alpha e^{-Ls}A_L\left(1+\Delta(s)W_2(s)\right): \|\Delta\|_\infty < 1\}.
\end{Meqnarray}
From the inequality (\ref{singular value}), we can take
\begin{Meqnarray}
 W_2(s) = \left(\sum_{i=1}^M \frac{r_i}{r} + \varepsilon \right)I
 \label{eq:W2s}
\end{Meqnarray}
where $\varepsilon$ is an appropriately small and positive number.

Based on the formulation of the uncertainty, we consider the block diagram
shown in Fig.~\ref{Robust Model},
where $W_1(s)$ plays the same role as $W(s)$ in the nominal controller design.
\begin{figure}[t]
\includegraphics[width = 85mm]{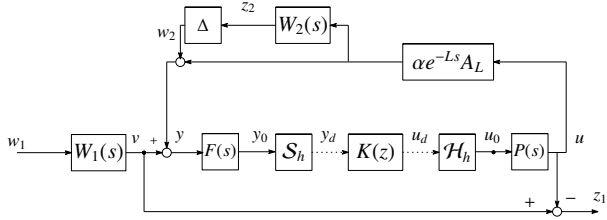}
\caption{Relay Station Model with Perturbation}
\label{Robust Model}
\end{figure}
Let $T_{z_1w_1}$ be the system from $w_1$ to $z_1$ and $T_{z_2w_2}$ be the system from $w_2$ to $z_2$.
If $\|T_{z_1w_1}\|_{\infty}$ is finite and $\|T_{z_2w_2}\|_{\infty} \leq 1$,
then the feedback system is robustly stable, that is, the feedback system is internally stable for
all $\Delta$ satisfying $\|\Delta\|_\infty < 1$
from the small gain theorem for sampled-data control systems \cite{Siv}.

Now we formulate the robust controller design problem as follows:
\begin{problem}
Find the digital controller (canceler) $K(z)$ that minimizes 
$\|T_{z_1w_1}\|_\infty$ subject to $\|T_{z_2w_2}\|_\infty \leq 1$.
\end{problem}

To solve this problem, we adopt
the finite dimensional $Q$-parametrization where we limit feasible controllers \cite{Hindi}.
Then, the constraints are represented by linear matrix inequalities (LMI's) and the problem 
can be efficiently solved via numerical optimization software such as \texttt{SDPT3} or \texttt{SeDuMi} on \texttt{MATLAB} \cite{Toh,Str}.
For more details, see \cite{Hindi}.

%
%
\section{Design Examples}
\label{sec:sim}

In this section, we show simulation results to illustrate the effectiveness of the proposed methods.

We assume that the sampling period $h$ is normalized to $1$,
the carrier frequency $f$ is $10000$~Hz,
the attenuation rate of the coupling wave channel
$r = 0.2$,
and the time delay $L=1$.
Note that sampling frequency is $1$~Hz, which is much smaller than the carrier frequency.
Note also that the time delay is equal to the sampling period $h$.
We assume the low noise amplifier $a_1 = 1$.
An anti-alias analog filter is not employed in this examples, namely we assume $F(s) = I$.
The post filter $P(s)$ is modeled by
\begin{Meqnarray}
P(s) = \dfrac{1}{0.001s+1}I.
\end{Meqnarray}
We also assume the transmission gain to be
\begin{Meqnarray}
 a_2 = 1000,
\end{Meqnarray}
that is, 60~dB.
The frequency characteristic $W(s)$ is modeled by
\begin{Meqnarray}
W(s) = \dfrac{1}{2s+1}I. \label{eq:Ws}
\end{Meqnarray}
With these parameters, we compute the $H^\infty$-optimal nominal controller $K(z)$
by FSFH with discretization number $N=16$.

With this controller, we simulate coupling wave cancelation
with a random rectangular wave input with period $4$~s
filtered by the low-pass filter $P(s)$.
Note that this signal contains frequency components beyond the Nyquist frequency, 
$\pi/h = \pi$ [rad/sec], although the frequency of the wave, 
$\pi/8h = \pi/8$ [rad/sec] is much lower than $\pi$.
Fig.\ref{fb} shows the reconstructed signal $u$ in the feedback system (see Fig.~\ref{Feedback Canceler}).
The feedback system is guaranteed to be stable and 
the canceler achieves small reconstruction errors as shown in Fig.~\ref{fb_error}.
\begin{figure}[t]
\includegraphics[width = 85mm]{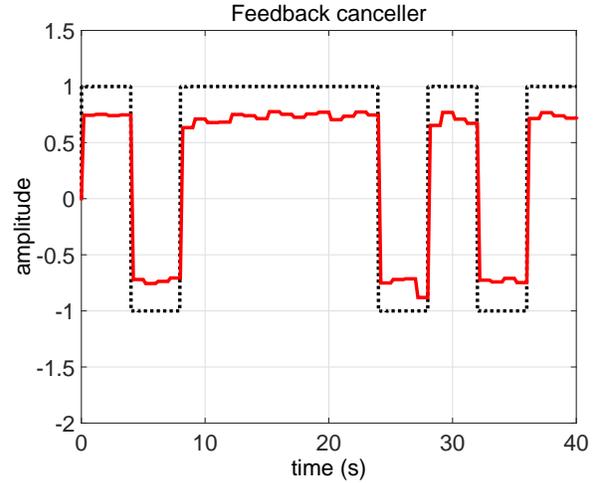}
\caption{Feedback Cancelation: input signal (dash-dot line), reconstructed signal $u$ by feedback canceler (solid line)}
\label{fb}
\end{figure}
\begin{figure}[t]
\includegraphics[width = 85mm]{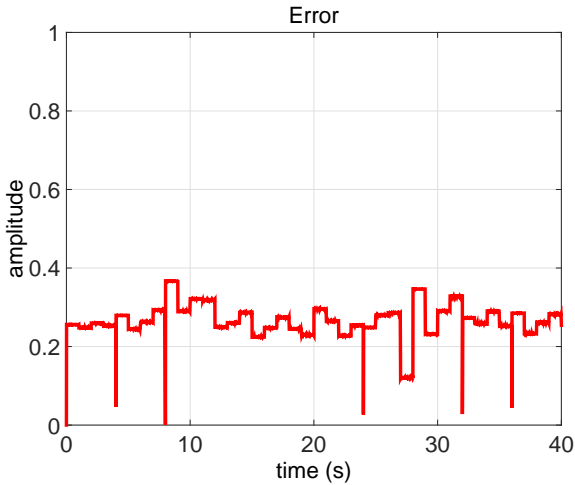}
\caption{Coupling wave effect $|v(t) - u(t)|$ by feedback canceler shown in Fig.~\ref{Feedback Canceler}}
\label{fb_error}
\end{figure}

Next, let us consider uncertainty in the coupling wave paths.
If the characteristic of the coupling wave paths changes, 
then the nominal controller may not work well.
To see this, let us design the nominal controller with
\begin{Meqnarray}
 a_2 = 100,
\end{Meqnarray}
that is, 40~dB,
and the other parameters are the same as above.
Then perturbing the nominal coupling wave path to be
\begin{Meqnarray}
 r e^{-Ls} + r_1 e^{-L_1s}
\end{Meqnarray}
where $r_1 = 0.07r, L_1 = 1.1L$.
It results in instability as shown in 
Fig.~\ref{fb_500}.
\begin{figure}[t]
\includegraphics[width = 85mm]{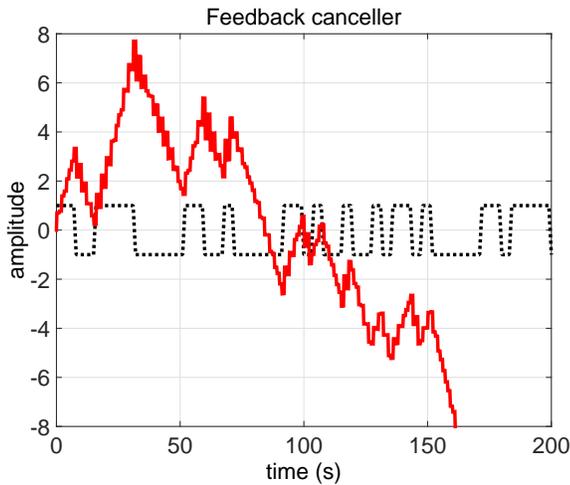}
\caption{Feedback Cancelation: input signal (dash-dot line) and reconstructed signal (solid line) 
with the perturbation}
\label{fb_500}
\end{figure}

To overcome this, we use the robust controller proposed in subsection \ref{sec:rob},
$M=1$, $r_1=0.1r$ with $W_1(s)=W(s)$ given in (\ref{eq:Ws}) and
$W_2(s)$ as in (\ref{eq:W2s}).
The dimension of $Q$-parametrization is $8$ with the FSFH number $N=4$.
Fig.~\ref{rob_500} shows the reconstructed signal, by which
we can observe the robust controller works well.

\begin{figure}[t]
\includegraphics[width = 85mm]{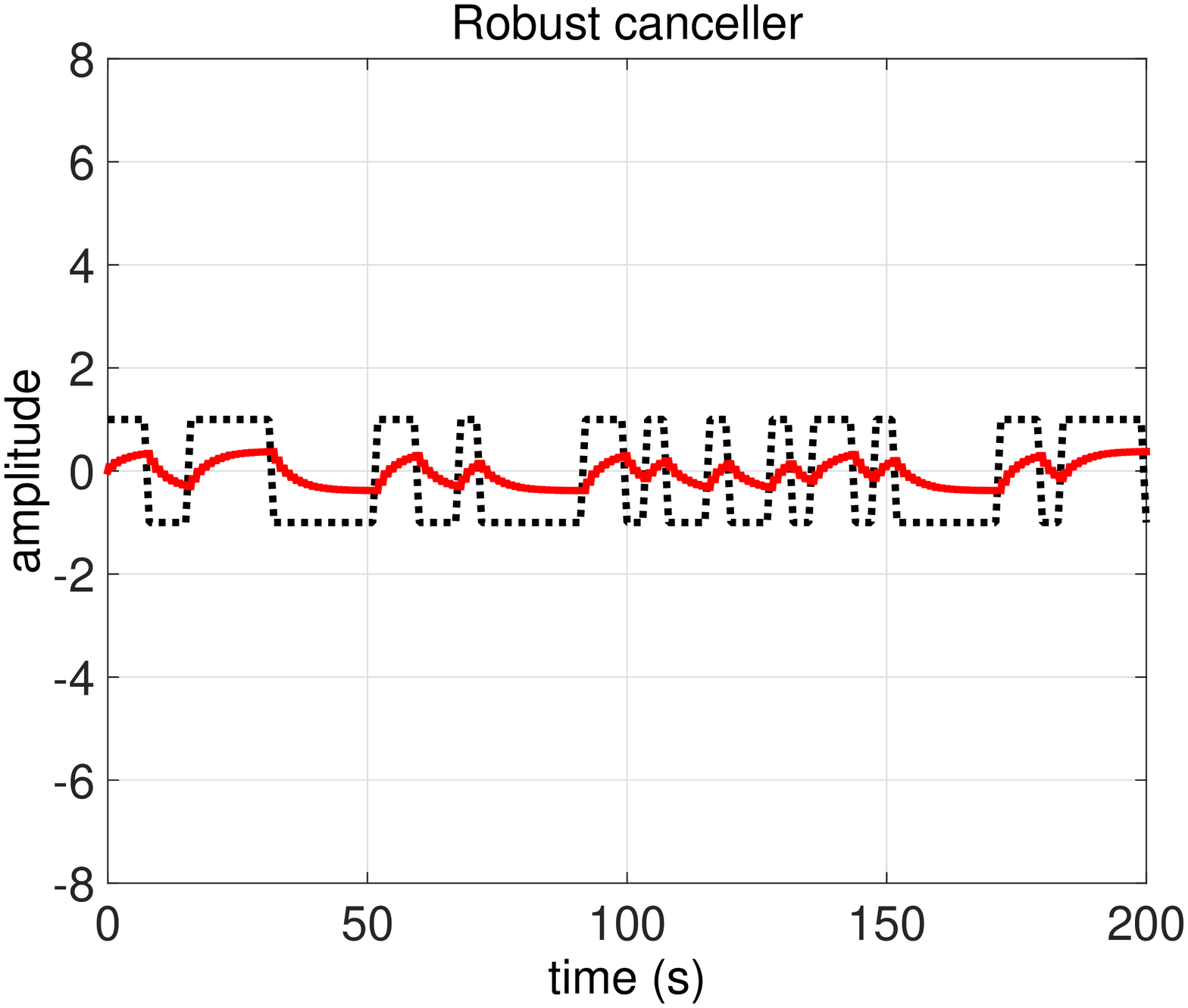}
\caption{Robust Cancelation: input signal (dash-dot line) and reconstructed signal (solid line) 
with the perturbation.}
\label{rob_500}
\end{figure}

%
%
\section{Conclusions}
\label{sec:conc}
In this paper, we have proposed feedback controller design
for self-interference cancelation in single-frequency full-duplex relay stations
based on the sampled-data $H^\infty$ control theory.
In particular, we proposed robust controller design against the unknown additive multipath.
Simulation results have been shown to illustrates the effectiveness of the proposed cancelers
in view of stability and robust stability.
Future work may include FIR (Finite Impulse Response) filter design and adaptive FIR filtering
as discussed in \cite{Nagahara13-1,Nag3}.


\begin{biography}
\profile{s}{Hampei Sasahara}{
  He received his B.S. degrees from Kyoto University, Japan, in 2014.  
  He is currently a M.S student at Kyoto University.
  His research interests include signal processing and control theory.  
  He is a student member of IEEE and ISCIE.}  
  
\profile{m}{Masaaki Nagahara}{
He received the Bachelor's degree in engineering from Kobe University in 1998, 
the Master's degree and the Doctoral degree in informatics from Kyoto University 
in 2000 and 2003. He is currently a Senior Lecturer at Graduate School of Informatics, 
Kyoto University. His research interests include digital signal processing,
networked control systems. He received Young Authors Award in 1999
and Best Paper Award in 2012 from SICE,
Transition to Practice Award from IEEE Control Systems Society in 2012,
and Best Tutorial Paper Award from IEICE Communications Society in 2014.
He is a senior member of IEEE, and a member of ISCIE and IEICE.}  

\profile{n}{Kazunori Hayashi}{
He received the B. Eng., M. Eng., and Ph.D. degrees in communication engineering 
from Osaka University, Osaka, Japan, in 1997, 1999 and 2002, respectively. 
Since 2002, he has been with the Department of Systems Science, Graduate School 
of Informatics, Kyoto University. 
He is currently an Associate Professor there. 
His research interests include digital signal processing for communication systems.
He received the ICF Research Award from the KDDI Foundation in 2008, 
the IEEE Globecom 2009 Best Paper Award, the IEICE Communications Society 
Excellent Paper Award in 2011, the WPMC’11 Best Paper Award, 
the Telecommunications Advancement Foundation Award in 2012,
and the IEICE Communications Society Best Tutorial Paper Award from in 2014.
He is a member of IEEE, IEICE and ISCIE.
  }

\profile{f}{Yutaka Yamamoto}{
He received the Ph.D. degree in mathematics from the University
of Florida, Gainesville, in 1978, under the guidance
of Professor Rudolf E. Kalman. He is currently a
Professor in the Department of Applied Analysis
and Complex Dynamical Systems, Graduate School
of Informatics, Kyoto University, Kyoto, Japan.
His current research interests include the theory of sampled
data control systems, its application to digital
signal processing, realization and robust control of
distributed parameter systems and repetitive control.
Dr. Yamamoto was the recipient of G. S. Axelby Outstanding Paper Award
of the IEEE Control Systems Society (CSS) in 1996, the Commendation for Science
and Technology by the Minister of Education, Culture, Sports, Science and
Technology, Prize for Science and Technology, in Research Category in 2007,
Distinguished Member Award of the Control Systems Society of the IEEE in 2009,
Transition to Practice Award of the IEEE CSS in 2012,
and several other awards from the Society of Instrument and Control Engineers
(SICE) and the Institute of Systems, Control and Information Engineers
(ISCIE).   
He has served as Senior Editor of the IEEE Trans.\ Automatic Control
(TAC) for 2010--2011, and as an associate editor for several journals including TAC,
Automatica, Mathematics of Control, Signals and Systems, Systems and
Control Letters.  He was Chair of the Steering Committee of
MTNS for 2006--2008, and the General Chair of the MTNS 2006 in Kyoto.
He is currently Past President of the Control Systems Society (CSS)
of the IEEE, and
was President for 2013, President-Elect for 2012, 
a vice president for 2005--2008 of CSS, and President of ISCIE of
Japan for 2008--2009.
He is a fellow of the IEEE and SICE.}

\end{biography}

\end{document}